\newcommand{\bla}{\color{black}}
\newcommand{\e}{\mathcal{E}}
\newtheorem{thm}{Theorem}
\newtheorem{corl}{Corollary}
\newtheorem{definition}{Definition}
\definecolor{green2}{RGB}{0,100,0}
\newcommand{\ml}{\mathcal{L}} 
\newcommand{\md}{\mathcal{D}} 
\begin{document}

\title{Ergodic and mixing quantum channels: From two-qubit to many-body quantum systems}

\author{S. Aravinda } 
\email{aravinda@iittp.ac.in}
\affiliation{Department of Physics, Indian Institute of Technology Tirupati, Tirupati, India~517619} 

\author{Shilpak Banerjee} 
\email{shilpak@iittp.ac.in}
\affiliation{Department of Mathematics and Statistics, Indian Institute of Technology Tirupati, Tirupati, India~517619} 

\author{Ranjan Modak} 
\email{ranjan@iittp.ac.in} 
\affiliation{Department of Physics, Indian Institute of Technology Tirupati, Tirupati, India~517619} 

\begin{abstract}
The development of classical ergodic theory has had a significant impact in the areas of mathematics, physics, and, in general, applied sciences. The quantum ergodic theory of Hamiltonian dynamics has its motivations to understand thermodynamics and statistical mechanics.  Quantum channel, a completely positive trace-preserving map, represents a most general representation of quantum dynamics and is an essential aspect of quantum information theory and quantum computation.  In this work, we study the ergodic theory of quantum channels by characterizing different levels of ergodic hierarchy from integrable to mixing. The quantum channels on single systems are constructed from the unitary operators acting on bipartite states and tracing out the environment. The interaction strength of these unitary operators measured in terms of operator entanglement provides sufficient conditions for the channel to be mixing. By using block diagonal unitary operators, we construct a set of non-ergodic channels. 
By using canonical form of two-qubit unitary operator, we analytically construct the channels on single qubit ranging from integrable to mixing. 
Moreover, we also study interacting many-body quantum systems that include the famous Sachdev-Ye-Kitaev (SYK) model and show that they display mixing within the framework of the quantum channel. 

\end{abstract}

\maketitle

\section{Introduction} 

 In classical ergodic theory, we study the statistical properties of a probability-measure-preserving transformation, and the five most widely studied properties in the community are ergodicity, weak-mixing, (strong) mixing, K-property, and Bernoulli. It turns out that these properties can be stacked in a nice hierarchy, often referred to as the ergodic hierarchy, with the strongest one being Bernoulli. More precisely: Bernoulli$\implies$K-property$\implies$(strong) mixing$\implies$weak mixing$\implies$ergodicity~\cite{cornfeld1982ergodictheory}. The reverse implications are not true in general and counterexamples exist.  Predictable, the set of transformations becomes smaller when one requires a stronger property in the hierarchy, and in fact, invertible Bernoulli transformations are completely classified~\cite{ornsetein1970bernoulli} while invertible ergodic transformations cannot be classified~\cite{foreman2011conjugacy} answering the isomorphism question laid down in von Neumann's seminal paper~\cite{vonNeumann1932Zur} in the negative.

In quantum theory, early works on quantum ergodic theory in Hamiltonian unitary dynamics are by von Neumann which he named it as quantum ergodic theorem and H theorem~\cite{von2010proof,neumann1929beweis}. These theorems met with criticisms~\cite{zaslavsky1981stochasticity,PeresI84,PeresII84} and counter justifications~\cite{goldstein2010long}. Peres defined and studied ergodicity and mixing based on his definition of quantum chaos~\cite{PeresI84,PeresII84}. Many others continued to study ergodic and mixing behaviour in quantum Hamiltonian dynamics, maps, and its relation to chaos and various other related quantities~\cite{prosen1998time,artuso1997numerical,casati1999mixing,prosen1999ergodic,prosen2002general,horvat2009ergodic,zhang2016ergodicity}.   In the quantum many-body systems, these concepts link with that of the thermalization and has produced a host of interesting results (see review~\cite{qc16rev} and reference therein). Remarkably, certain spacetime dual quantum many-body circuit provides an analytically solvable minimal model for the correlation function between two spatiotemporal observables~ \cite{Akila2016,Bertini2019,gutkin2020exact,BraunPRE2020}, and thereby facilitates the characterization of the entire ergodic hierarchy~\cite{Bertini2019,ASA_2021}. The question of how an isolated many-body quantum system thermalizes while going through unitary dynamics also brought lots of interest in the last two decades thanks to extraordinary advancements in cold atom experiments~\cite{RevModPhys.80.885,RevModPhys.83.1405}. 
It is now well-established that
if the many-body energy level spacing of a system are of the Wigner-Dyson
type, one expects thermalization to occur~\cite{rigol.10} i.e. the system acts as its own bath, obey the eigenstate thermalization Hypothesis~\cite{rigol.07,rigol.12,sredniki.94,deutsch.91}. 

Quantum channels, a linear map that maps quantum states, is a generalization of unitary quantum dynamics~\cite{wolf2012quantum,wilde2013quantum}. Mathematically, it is a linear, completely positive trace preserving (CPTP) map~\cite{paulsen2002completely,bhatia2009positive}. At this juncture, we would like to map the results of ergodic theory from classical to quantum. Here, the quantum channel acts as a transformation on the set of quantum states, and we are interested in studying the iterative application of quantum channels. 
Ergodic theory of quantum channels has been studied in the past~\cite{alicki2007quantum,holevo2003statistical} using the algebraic framework. With the emergence of quantum information theory and computation, the ergodic theory of quantum channels in finite-dimensional systems took the forefront as it has been the topic of many recent works~\cite{burgarth2007generalized,burgarth2013ergodic,movassagh2020ergodic,movassagh2021theory,singh2022ergodic}. The theoretical results of iterative application of quantum channels has been used in various tasks in quantum information theory. For example, in the studies of structure-preserving maps, which have led to a better understanding of quantum error correction~\cite{raginsky2002strictly,blume2008characterizing,blume2010information}, the convergence rates of quantum channels are related to critical exponents in the tensor representation of many-body systems~\cite{giovannetti2008quantum},  to construct any measurement using sequential generalized measurements~\cite{ma2023sequential,linden2022use}, and asymptotic results in quantum channels~\cite{albert2019asymptotics,amato2023asymptotics}.  Mixing of quantum channels is related to information-theoretic channel capacities~\cite{singh2024zero}.

\begin{figure}
    \centering
    \includegraphics[scale=0.52]{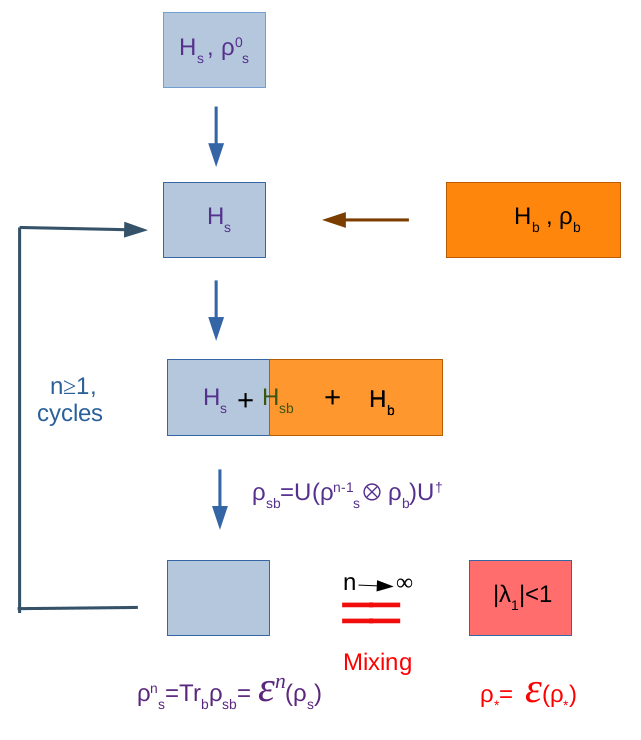}
    \caption{ Sketch of the iterative process studied in this work (see text Sec.~\ref{summary}
for the description).}
    \label{schematic}
\end{figure}
\section{Summary of main results\label{summary}}

In this section, we first briefly describe our setup and then summarize our main findings. We prepare a system that is described by the Hamiltonian $H_s$ and state $\rho_s$. Then, we connect the system with a bath and let the entire system (system+bath) evolve. $\rho_b$ and $H_b$ are referred to as the state and the Hamiltonian of the bath, respectively, and $H_{sb}$ stands for the interaction between the system and the bath. After the evolution, the state of the entire system becomes, 
$\rho_{sb}=U (\rho \otimes \sigma ) U^\dagger$, where $U=e^{-i(H_s+H_b+H_{sb})}$. We trace out the bath degrees of freedom from $\rho_{sb}$ and obtain the reduced density matrix for the system  $\rho^1_s=\Tr_b\rho_{sb}$.  Again, we connect the system with the bath and repeat it iteratively $n$ times. Schematically, this is explained in Fig.~\ref{schematic}. 

In a broader theoretical perspective, every quantum channel acting on the single quantum system has a representation in terms of a unitary operator acting on a bipartite system and then tracing out one of the systems~\cite {Zyczkowski2004,Bengtsson2007}. 
By using the notions of matrix realignment~\cite{oxenrider1985matrix,sudarshan1961stochastic,Zyczkowski2004} used extensively in entanglement theory~\cite{chen2002matrix,rudolph2003some} and recently in the many-body systems literature~\cite{SAA2020,RAA22}, we provide the representation of the channel in terms of the unitary operator and the environment, and is explained in Sec.~(\ref{sec:chanrep}). Our study focuses largely on the spectral properties of channels, and the details are presented in Ref.~\cite{wolf2012quantum,wilde2013quantum}. For completeness and continuity, we present the spectral results of the channel in detail in Sec.~(\ref{sec:chanspec}), and we define the various equivalent definitions of ergodic and mixing in Sec.~(\ref{sec:ergdef}). The main results of our paper at the formalism level and application to many-body physics are presented in Sec.~(\ref{sec:erg_res}) and Sec.~(\ref{sec:manybody}), respectively. The sufficiency condition on mixing of quantum channels in terms of the strength of unitary operator measured as operator entanglement~\cite{Zanardi2001,Zanardi2000} is proved in Sec.~(\ref{sec:suff_erg}), the stability of sufficiency condition under any local unitary operators is presented in Sec.~(\ref{sec:stab}), ergodic and non-ergodic channels are constructed by using block diagonal unitary operators in Sec. (\ref{sec:diagon}), and bounds on the convergence of any initial state to the fixed state in Sec.~(\ref{sec:conv}). Finally, we construct various single qubit channels using two-qubit unitary operators as an illustration in Sec.~(\ref{sec:qubits}) and
 also investigate many-body quantum Hamiltonians (e.g. the Sachdev-Ye-Kitaev (SYK) model~\cite{sachdev.2015,sachdev.93} and experimentally realizable short-ranged model of interacting fermions in the presence of quasi-periodic potential~\cite{schreiber2015observation} in Sec.~(\ref{sec:manybody}).   Strikingly, all the systems studied here (which involve both short-ranged and long-ranged models) show mixing, and we could not find an example where the system is ergodic but not mixing.
  Moreover, we find the mixing speed decreases with the increase of the Hilbert space dimension, and in parallel, it also becomes harder for the system to meet the sufficiency condition of the mixing, which was derived in Sec.~(\ref{sec:suff_erg}) in terms of the operator entanglement. 
Finally, we also qualitatively predict the scrambling time for such systems using the generalized spectral form factor (SFF).



\section{Quantum Channels \label{formalism}}
In this Section, we provide a brief introduction to the quantum channels and their spectral properties. In Sec.~(\ref{sec:chanrep}), various representations of quantum channels and their interrelations are explained by explicitly expressing the linear operator acting on a single system in terms of the bipartite unitary operator. In Sec.~(\ref{sec:chanspec}), we reproduce the spectral results for the quantum channel, which is required for our work. In Sec.~(\ref{sec:ergdef}), we explain the definition of ergodic and mixing quantum channels. 

\subsection{Representation of quantum channels \label{sec:chanrep}}

Let  $\mathcal{H}^d$ is the Hilbert space of dimension $d$. The quantum states are represented by the density matrix  
$\rho$ such that $\rho \geq 0$ and $\Tr (\rho) = 1$, and the state space $\mathcal{S}(\mathcal{H})$ is defined over  the set of density matrices. The map $\e$ is a linear, completely positive trace preserving (CPTP),  defined over the state space $\mathcal{S}(\mathcal{H})$, and is called a quantum channel~\cite{wilde2013quantum}. The action of $\e$ on the state $\rho$ is 
\begin{equation}
\rho^\prime = \e (\rho), 
\label{eq:chn1}
\end{equation}
and its corresponding linear operator is represented as $\ml$~\cite{Bengtsson2007} 
\begin{equation}
    \ket{\rho^\prime} = \ml \ket{\rho}
    \label{eq:chn2}
\end{equation}
where the vectorization is defined as $\mel{i}{\rho}{j} = \braket{ij}{\rho}$. The linear operator $\ml$ is a $d^2 \times d^2$ matrix, and we use bipartite index notation to represent the matrix, $\mel{i\alpha}{\ml}{j\beta} \equiv \ml_{i\alpha,j\beta}$. Note that the operator $\ml$ is not Hermitian. The channel can be represented using an Hermitian operator,  {\it dynamical matrix}~\cite{sudarshan1961stochastic,Zyczkowski2004} $\md$, defined as 
\begin{equation}
\md = \ml^{R_2}, 
\end{equation} 
where $R_2$ is a matrix reshaping operation defined as $\mel{i\alpha}{A}{j\beta} = \mel{ij}{A^{R_2}}{\alpha \beta}$. Expressing $\md$ in its eigenbasis as 
\begin{equation}
    \md = \sum_i  \ketbra{A_i},
\end{equation}
where we rescaled the basis as $\ket{A_i} = \sqrt{g_i} \ket{\gamma_i}$, with $g_i$ and $\ket{\gamma_i}$ are eigenvalues and eigenstates of $\md$ respectively. The operator $\ml$ we can be  written as  
\begin{equation}
    \ml  = \sum_i g_i \gamma_i \otimes \gamma_i^* = \sum_i A_i \otimes A_i^* .
\end{equation}
Here we used the identity $(A\otimes B)^{R_2} = \ketbra{A}{B^*}$. This provides a canonical form of Kraus representation with $\sum_i A_i^\dagger A_i = I$, 
\begin{equation}
    \e(\rho) = \sum_i A_i \rho A_i^\dagger. 
    \label{eq:kraus}
\end{equation}
The operator $\md$ is also called as {\it Choi} matrix~\cite{choi1975completely} and represented as 
\begin{equation}
    \md = (\e \otimes I) \ketbra{I},
\end{equation}
with the action of the channel can be recovered as 
\begin{equation}
    \e (\rho) = \Tr_2 [(I \otimes \rho^T) \md],
\end{equation}
where $T$ is the matrix transposition operation,  $I$ is the identity operator, and $\ket{I} = \sum_i \ket{ii}$.  

Every channel $\e$ has its unitary representation in a larger dimension. The channel $\e$ acting on a single system in $\mathcal{H}^d$ is obtained from the unitary operator acting on the bipartite system in $\mathcal{H}^d \otimes \mathcal{H}^d$  by tracing the second system,  
\begin{equation}
 \e (\rho) = \Tr_2 [U (\rho \otimes \sigma ) U^\dagger].  
 \label{eq:uni_ex}
\end{equation}
This represents the physical scenario in which the joint system evolves under the unitary operator $U$, in which the second system can be treated as an environment (or bath). We are interested in the dynamics of the system alone.  
Let $\sigma = \sum_{ij}  \lambda_{ij} \ketbra{i}{j}$ and for the channel $\e (\rho)$ in Eqn.~(\ref{eq:uni_ex}), the matrix representation is 
\begin{equation}
 \begin{split}
  & \bra{\alpha}  \e(\rho) \ket{\beta}  = \sum_\gamma \bra{\alpha \gamma} U (\rho \otimes \sigma ) U^\dagger \ket{\beta \gamma } \\
  &= \sum_{\gamma ij mk} \lambda_{ij} \bra{\alpha \gamma }U \ket{mi} \bra{kj} U^\dagger \ket{\beta \gamma} \bra{m}\rho \ket{k} \\
  &= \sum_{\gamma ij mk} \lambda_{ij} \bra{\alpha m } U^{R_2} \ket{\gamma i}  \bra{\gamma j} U^{R_2\dagger} \ket{\beta k} \braket{mk}{\rho} \\
  &= \sum_{ mk}  \bra{\alpha \beta } \left[ U^{R_2} (I \otimes \sigma ) U^{R_2 \dagger} \right]^{R_2} \ket{mk}\braket{mk}{\rho}    
 \end{split}
\end{equation}
By referring Eq.~(\ref{eq:chn1}) and Eq.~(\ref{eq:chn2}), the matrix representation of $\ml$ in terms of the unitary operator $U$ is 
\begin{equation}
 \ml = \left[ U^{R_2} (I \otimes \sigma ) U^{R_2 \dagger} \right]^{R_2}.
 \label{eq:L}
\end{equation}

\subsection{Spectral properties of channels \label{sec:chanspec}}
The ergodic properties of a channel depend largely on the spectral properties of quantum channels. For completeness, we comprehensively explain these results here~\cite{wolf2012quantum,wilde2013quantum}. 

The eigenvalue equation of the channel $\e$ with eigenvalue $\lambda$ and an eigenoperator  $X$ is 
\begin{equation}
    \e (X) = \lambda X. 
\end{equation}
The eigenvalues and eigenoperators are the solution of the characteristic equation $\det (\lambda I - \sum_i A_i \otimes A_i^*) =0 $. Denote $\texttt{spec} (\e)$ as the spectrum of the channel $\e$. However, note that the channel $\ml$ need not be diagonalizable. 

\begin{thm}
    The channel $\e$ has a density matrix $\rho_*$ as a  fixed point. 
\end{thm}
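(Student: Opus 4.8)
The plan is to invoke a fixed-point theorem on the space of density matrices, on which the channel acts as a self-map. The key structural facts are that $\s(\mathcal{H})$ is convex and compact and that $\e$ is a continuous map sending $\s(\mathcal{H})$ into itself; once these are in place, the existence of $\rho_*$ follows immediately from Brouwer's fixed-point theorem.

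First I would verify the three hypotheses. Convexity of $\s(\mathcal{H})$ is clear, since a convex combination of positive, unit-trace operators is again positive with unit trace. Compactness follows because, in dimension $d$, $\s(\mathcal{H})$ is a closed and bounded subset of the real vector space of Hermitian $d\times d$ matrices: the constraints $\rho=\rho^\dagger$, $\rho\geq 0$, and $\Tr\rho=1$ are closed conditions, and $\rho\geq 0$ together with $\Tr\rho=1$ confines the spectrum to $[0,1]$. Continuity of $\e$ is automatic because $\e$ is linear on a finite-dimensional space. Finally, $\e(\s(\mathcal{H}))\subseteq\s(\mathcal{H})$ is exactly the content of the CPTP property: complete positivity implies positivity, so $\e(\rho)\geq 0$, and trace preservation gives $\Tr\e(\rho)=\Tr\rho=1$. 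Equivalently, in the Kraus form of Eq.~(\ref{eq:kraus}) one has $\e(\rho)=\sum_i A_i\rho A_i^\dagger\geq 0$ and $\Tr\e(\rho)=\Tr\!\big[(\sum_i A_i^\dagger A_i)\rho\big]=\Tr\rho$ using $\sum_i A_i^\dagger A_i=I$.

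With these facts, Brouwer's fixed-point theorem applied to the continuous self-map $\e:\s(\mathcal{H})\to\s(\mathcal{H})$ yields a point $\rho_*\in\s(\mathcal{H})$ with $\e(\rho_*)=\rho_*$, which is the claimed density-matrix fixed point. As a more hands-on alternative that avoids citing Brouwer, I would take any initial state $\rho_0$ and form the Cesàro averages $\rho_N=\tfrac{1}{N}\sum_{n=0}^{N-1}\e^n(\rho_0)$; these all lie in the compact set $\s(\mathcal{H})$, so some subsequence converges to a limit $\rho_*\in\s(\mathcal{H})$, and since $\e(\rho_N)-\rho_N=\tfrac{1}{N}(\e^N(\rho_0)-\rho_0)$ is norm-bounded and hence vanishes as $N\to\infty$, continuity of $\e$ forces $\e(\rho_*)=\rho_*$.

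The step requiring the most care---and the reason I would prefer the fixed-point formulation over a purely spectral one---is guaranteeing that the fixed point is a genuine density matrix rather than merely an eigenoperator. One can see very cheaply that $1\in\texttt{spec}(\e)$: the adjoint channel $\e^\dagger(X)=\sum_i A_i^\dagger X A_i$ satisfies $\e^\dagger(I)=I$ by the normalization $\sum_i A_i^\dagger A_i=I$, so $1$ is an eigenvalue of $\e^\dagger$ and therefore of $\e$. However, this alone does not produce a positive, unit-trace eigenoperator; extracting positivity would require a Perron--Frobenius-type argument. The compactness-and-convexity route sidesteps this obstacle entirely, since the density-matrix property is built into the set on which the fixed point is found.
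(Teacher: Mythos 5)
Your argument is correct and follows essentially the same route as the paper: both rely on convexity and compactness of the state space together with continuity of the CPTP self-map, and then invoke a fixed-point theorem (the paper cites Markov--Kakutani where you cite Brouwer, an immaterial difference for a single affine map in finite dimensions). Your Ces\`aro-average alternative is in fact just the standard proof of Markov--Kakutani specialized to one map, so it adds rigor but not a genuinely different idea.
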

\begin{proof}
    The proof follows from the fact that the channel $\e$ is linear continuous mapping on the set of density matrices. Since the set of density matrices are convex and compact, due to Markov-Kakutani theorem [V.10.6 of Ref.~\cite{dunford1988linear} ], the theorem  follows~\cite{terhal2000problem}. 
\end{proof}

\begin{corl}
    If $\lambda_0$ is an eigenvalue of the channel with the density matrix as its eigenstate say $\rho_*$, i.e., $\e (\rho_*) = \lambda_0 \rho_*$, then $\lambda_0 = 1$
\end{corl}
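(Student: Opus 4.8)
The plan is to exploit the defining trace-preserving property of the channel $\e$. Since $\e$ is CPTP, it satisfies $\Tr(\e(X)) = \Tr(X)$ for every operator $X$ in its domain. The entire argument reduces to comparing the trace of each side of the eigenvalue equation.

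First I would apply the trace to both sides of $\e(\rho_*) = \lambda_0 \rho_*$. On the left, trace preservation gives $\Tr(\e(\rho_*)) = \Tr(\rho_*)$. On the right, linearity of the trace gives $\Tr(\lambda_0 \rho_*) = \lambda_0 \Tr(\rho_*)$. Equating these yields the single scalar identity
\begin{equation}
\Tr(\rho_*) = \lambda_0 \, \Tr(\rho_*).
\end{equation}

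Next, I would invoke the normalization condition that is part of the definition of a density matrix, namely $\Tr(\rho_*) = 1$. Since $\Tr(\rho_*) = 1 \neq 0$, I can cancel this common factor (or simply substitute it) to conclude $\lambda_0 = 1$, which is the claim.

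The only point that requires care is justifying the cancellation, i.e., that $\Tr(\rho_*) \neq 0$; this is immediate from the hypothesis that $\rho_*$ is a \emph{density matrix} and hence normalized. There is no substantive obstacle here: the corollary is a direct consequence of trace preservation alone, and neither complete positivity nor any spectral or convexity argument (such as the one used for Theorem~1) is needed beyond the normalization of $\rho_*$.
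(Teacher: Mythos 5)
Your proof is correct and is exactly the paper's argument: the paper also deduces $\lambda_0 = 1$ by taking traces of both sides of $\e(\rho_*) = \lambda_0\rho_*$, invoking trace preservation and the normalization $\Tr(\rho_*) = 1$. You have simply written out in full the steps the paper compresses into one line.
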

\begin{proof}
    Since $\Tr{\rho_*}=1$, the proof follows from the trace-preserving property of the channel. 
\end{proof}

\begin{thm} \label{thm:comspec}
 $\lambda$ is an eigenvalue of the channel $\e$ with an eigenvector $X$ iff $\lambda^*$ is also an eigenvalue with eigenvector $X^\dagger$. 
\end{thm}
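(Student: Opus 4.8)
The plan is to derive the result directly from the canonical Kraus representation of the channel established in Eq.~(\ref{eq:kraus}), using only the elementary fact that Hermitian conjugation is an involution that conjugates scalars and reverses products. First I would write the eigenvalue equation $\e(X) = \lambda X$ explicitly in Kraus form as $\sum_i A_i X A_i^\dagger = \lambda X$. The single key step is then to take the Hermitian adjoint of both sides of this identity.

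On the left-hand side, using $(A_i X A_i^\dagger)^\dagger = A_i X^\dagger A_i^\dagger$ together with the fact that adjunction distributes over the (finite) sum, I obtain $\sum_i A_i X^\dagger A_i^\dagger$, which is precisely $\e(X^\dagger)$ by Eq.~(\ref{eq:kraus}). On the right-hand side, $(\lambda X)^\dagger = \lambda^* X^\dagger$. Equating the two yields $\e(X^\dagger) = \lambda^* X^\dagger$, so $\lambda^*$ is an eigenvalue of $\e$ with eigenvector $X^\dagger$. This establishes the forward implication.

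For the converse I would simply observe that the argument is symmetric: applying the same adjoint operation to $\e(X^\dagger) = \lambda^* X^\dagger$ and using $(X^\dagger)^\dagger = X$ together with $(\lambda^*)^* = \lambda$ recovers the original equation $\e(X) = \lambda X$. Hence the two statements are equivalent, which proves the ``iff''.

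I do not anticipate a genuine obstacle here, since the whole argument rests on standard properties of Hermitian conjugation applied to a finite Kraus sum. The only point that deserves emphasis is the choice of representation: the result is cleanest stated at the level of the channel $\e$ acting on operators rather than the vectorized matrix $\ml$, because in the Kraus form the conjugation symmetry is manifest. A corollary worth flagging is that the non-real eigenvalues of $\e$ necessarily occur in complex-conjugate pairs whose eigenoperators are mutual adjoints, and in particular any eigenoperator associated with a real eigenvalue can be taken Hermitian.
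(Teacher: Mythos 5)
Your proposal is correct and follows essentially the same route as the paper: both take the Hermitian adjoint of the Kraus-form eigenvalue equation $\sum_i A_i X A_i^\dagger = \lambda X$ and note that the operation is an involution, which gives the equivalence. The extra observation about complex-conjugate eigenvalue pairs and Hermitian eigenoperators for real eigenvalues is a sensible addition but does not change the argument.
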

 \begin{proof} Using the Kraus representation, we can express the action of the channel as  $\mathcal{E}(\rho)=\sum_i A_i\rho A_i$ where $\sum_iA_i^\dagger A_i=I$. So, starting with the eigenvalue equation, 
    \begin{equation}
        \begin{split}
            &\mathcal{E}(X)=\lambda X\\
            \iff & \sum_i A_i X A_i^\dagger =\lambda X\\
            \iff & \left(\sum_i A_i X A_i^\dagger \right)^\dagger = \left(\lambda X\right)^\dagger \\
            \iff & \sum_i A_i X^\dagger A_i^\dagger = \lambda^* X^\dagger\\
            \iff & \mathcal{E}(X^\dagger)=\lambda^* X^\dagger
        \end{split}
    \end{equation}
\end{proof}
If $\e$ is a linear map, the adjoint map $\e^\dagger$ is defined as  $\e^\dagger (X) = \sum_i A_i^\dagger X A_i$. The adjoint map $\e^\dagger$ of a completely positive map is also completely positive~\cite{wilde2013quantum}. The adjoint of a CPTP map, a quantum channel, is an unital map. If the channel $\e$ is unital, then its adjoint $\e^\dagger$ is trace-preserving and thereby forms an unital channel.  The spectrum of $\e$ and $\e^\dagger$ is related as follows: $\texttt{spec}(\e^\dagger) = \texttt{spec} (\e)^*$, due to which a channel $\e$ and its adjoint $\e^\dagger$ has same spectrum.

\begin{thm}
    The eigenvalues of the channel $\e$ are bounded as: 
    \begin{equation}
        \abs{\lambda} \leq 1. 
    \end{equation}
\end{thm}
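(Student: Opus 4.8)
The plan is to show that no eigenoperator can be amplified under repeated application of the channel, which forces $\abs{\lambda}\le 1$. The key input is that a channel is a contraction in the trace norm on Hermitian operators, and that this property survives under iteration.

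First I would prove the contractivity: for every Hermitian $H$ one has $\lVert\e(H)\rVert_1\le \lVert H\rVert_1$. Writing the Jordan decomposition $H=H_+-H_-$ into positive and negative parts with mutually orthogonal supports gives $\lVert H\rVert_1=\Tr H_+ + \Tr H_-$. Complete positivity yields $\e(H_\pm)\ge 0$, trace preservation yields $\Tr\e(H_\pm)=\Tr H_\pm$, and the triangle inequality then gives
\[
\lVert\e(H)\rVert_1 \le \lVert\e(H_+)\rVert_1 + \lVert\e(H_-)\rVert_1 = \Tr H_+ + \Tr H_- = \lVert H\rVert_1 .
\]
Since the composition of CPTP maps is again CPTP, every iterate $\e^n$ is itself a channel, so the same bound $\lVert\e^n(H)\rVert_1\le \lVert H\rVert_1$ holds for all $n$.

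Next, for an eigenvalue $\lambda$ with eigenoperator $X$, I would split $X=H_1+iH_2$ into its Hermitian and anti-Hermitian parts and use $\e^n(X)=\lambda^n X$ together with $\lVert H_{1,2}\rVert_1\le \lVert X\rVert_1$ to obtain
\[
\abs{\lambda}^n\lVert X\rVert_1 = \lVert\e^n(X)\rVert_1 \le \lVert\e^n(H_1)\rVert_1 + \lVert\e^n(H_2)\rVert_1 \le 2\lVert X\rVert_1 .
\]
Hence $\abs{\lambda}^n\le 2$ for every $n$, and letting $n\to\infty$ yields $\abs{\lambda}\le 1$. Note that this argument needs no assumption of diagonalizability of $\ml$, since it uses only a single eigenpair.

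The main obstacle is precisely the case of a genuinely complex $\lambda$: then the eigenoperator cannot be chosen Hermitian, so the sharp trace-norm contraction does not apply to $X$ directly, and the splitting into Hermitian parts costs a factor of $2$. The point to appreciate is that an $n$-independent constant on the right-hand side is harmless, because taking the $n$-th root sends it to $1$; this is Gelfand's spectral-radius formula in disguise. An equivalent and perhaps more economical route, using facts already recorded above, is to pass to the adjoint: $\e^\dagger$ is unital and positive, so for Hermitian $X$ it preserves the order sandwich $-\lVert X\rVert_\infty I \le X \le \lVert X\rVert_\infty I$, giving $\lVert\e^\dagger(X)\rVert_\infty\le\lVert X\rVert_\infty$ and hence spectral radius at most $1$; since $\texttt{spec}(\e^\dagger)=\texttt{spec}(\e)^*$, the bound transfers to $\e$.
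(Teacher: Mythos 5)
Your proof is correct, but it takes a genuinely different route from the paper. The paper works in the Heisenberg picture: it passes to the adjoint map $\e^\dagger$, which is unital and has the conjugate spectrum, and then invokes the Russo--Dye theorem to obtain $\norm{\e^\dagger(X)}_\infty \le \norm{\e^\dagger(I)}_\infty \norm{X}_\infty = \norm{X}_\infty$ for \emph{every} operator $X$, not just Hermitian ones; applying this once to an eigenvector gives $\abs{\lambda}\le 1$ immediately, with no iteration and no factor of $2$. You instead stay in the Schr\"odinger picture and prove the trace-norm contractivity $\norm{\e(H)}_1\le\norm{H}_1$ on Hermitian operators from first principles (Jordan decomposition, positivity, trace preservation), then handle a general eigenoperator by splitting into Hermitian and anti-Hermitian parts at the cost of a constant factor, which you correctly absorb by iterating and taking $n$-th roots --- the Gelfand spectral-radius device. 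Both arguments are sound: yours is more self-contained, replacing the citation of Russo--Dye by an elementary two-line contraction estimate, at the price of the extra limiting step; the paper's is shorter once Russo--Dye is granted and yields the operator-norm contraction on all of $\mathcal{M}_d$ in one shot. Your closing remark about the order sandwich $-\norm{X}_\infty I \le X \le \norm{X}_\infty I$ under the positive unital map $\e^\dagger$ is essentially an elementary substitute for Russo--Dye restricted to Hermitian inputs, and so sits halfway between your main argument and the paper's; note only that it still needs the same factor-of-$2$-plus-$n$-th-root step to cover non-Hermitian eigenoperators, i.e.\ genuinely complex eigenvalues, so it is not actually shorter than your primary route.
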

\begin{proof}
    First we switch from the Schr\"odinger picture for $\e$ to the Heisenberg picture for $\e^\dagger$ induced by 
    \begin{equation}
        \begin{split}
            \Tr[\e(X)Y]=\Tr[X\e^\dagger(Y)]
        \end{split}
    \end{equation}
    We also recall that linearity and complete positivity of $\e$ translates to linearity and complete positivity of $\e^\dagger$ whereas the trace-preserving property of $\e$ becomes unitality ($\e^\dagger(I)=I)$ of $\e^\dagger$, and moreover, both $\e$ and $\e^\dagger$ have the same spectrum. So, proving the statement for $\e^\dagger$ will suffice.
    
    Since for any $X\in\mathcal{H}^d$, $X/\norm{X}_\infty$ is a contraction, a consequence of the theorem of Russo and Dye \cite{RussoDye1966} can be stated as follows \cite{bourin2019russo}:
    \begin{equation}
        \norm{\e^\dagger(X)}_\infty \leq \norm{\e^\dagger (I)}_\infty \norm {X}_\infty . 
    \end{equation}
    Here $\norm{\cdot}_\infty$ is an operator norm and for matrix it is defined as the maximum absolute row sum of the matrix. In the particular case where $X$ is an eigenvector of $\e^\dagger$ with eigenvalue $\lambda$, we have 
    \begin{equation}
        \begin{split}
            & \abs{\lambda} \norm{X}_\infty  \leq \norm{\e^\dagger (I)} \norm{X}_\infty \\ 
            \implies & \abs{\lambda} \leq \norm{\e^\dagger (I)}_\infty = \norm{I}_\infty =1
        \end{split}
    \end{equation} 
\end{proof}

The ergodic properties of quantum channels largely depend on their spectrum. Arrange the spectrum of $\e$ in the increasing order of magnitude as
\begin{equation}
    \lambda_0 = 1 \geq \abs{\lambda_1} \geq \abs{\lambda_2} \cdots \abs{\lambda_{d^2-1}},
    \label{order}
\end{equation}
which will be used to define the various properties of quantum channels.

\subsection{Definitions: Ergodic and mixing quantum channels \label{sec:ergdef}}
 Ergodic theory is a mathematically well-characterized field~\cite{cornfeld1982ergodictheory}. All  definitions related to the ergodic hierarchy are explained in various sources~\cite{berkovitz2006ergodic}. Here, we define ergodic and mixing of quantum channels~\cite{singh2022ergodic,burgarth2013ergodic}. The essential aspect of the dynamical system is the measure space equipped with transformation. In the ergodic theory of quantum systems~\cite{alicki2001quantum},  the set of events is represented by the positive operator with the measure given by trace rule due to Gleason's theorem~\cite{gleason1975measures,busch2003quantum,barnett2014quantum}. We focus on quantum channels as the transformations. 

The composition of the channel $\e_1$ and $\e_2$ is denoted as $(\e_1 \circ \e_2) (\rho) = \e_1(\e_2(\rho))$. Note that if $\e$ is CPTP map, then any iterative application of the map 
\begin{equation}
    \e^n := \e \circ \cdots \circ \e 
\end{equation}
is also a CPTP map~\cite{wilde2013quantum}. The  following quantity, similar to taking an average, is also a quantum channel, 
\begin{equation}
    \Lambda_N^\e (\rho) := \frac{1}{N+1} \Sigma_{n=0}^N \e^n (\rho),
\end{equation}
where $\e^0 $, an identity channel.

\begin{definition}[Ergodic~\cite{burgarth2013ergodic}] 
A quantum channel is ergodic if there exists a unique state $\rho_*$ such that 
\begin{equation}
\e (\rho_*) = \rho_* . 
\end{equation} 
\end{definition}
 Here $\rho_*$ is the unique fixed point. In terms of the spectrum, the channel $\e$ is ergodic if and only if it has $\lambda_0 =1$ and $\lambda_i \neq 1, \forall i \neq 0$.   The ergodicity of the channel can be defined through the convergence of $\Lambda_N^\e$. The channel $\e$ is ergodic if and only if 
\begin{equation}
    \lim_{N\rightarrow \infty} \norm{\Lambda_N^\e (\rho) - \rho_*}_1 = 0, 
\end{equation}
where $\norm{A}_1 := \Tr[\sqrt{A^\dagger A}]$ is the trace norm. It instructs us to see the convergence in terms of the function of operators. Let $\mathcal{M}_d$ is the set of $d \times d$ complex matrices.  For any $A,B \in \mathcal{M}_d $, the quantum channel $\e$ is ergodic if and only if the following equivalent conditions holds~\cite{singh2022ergodic}, 
\begin{equation}
    \begin{split}
        \lim_{N\rightarrow \infty}  \Lambda_N^\e(A) &= \Tr (A) \rho_*  \\
        \lim_{N\rightarrow \infty} \frac{1}{N+1} \Sigma_{n=0}^N \Tr (\e^n(A)B) &= \Tr(A)\Tr(B\rho_*). 
    \end{split}
\end{equation}

\begin{definition}[Mixing~\cite{burgarth2013ergodic}]
    $\e $ is mixing if there exists a unique state $\rho_*$ such that 
    \begin{equation}
        \lim_{n \to \infty} \norm{\e^n (\rho) - \rho_*}_1 = 0 \quad \forall \rho 
    \end{equation}
\end{definition}
In terms of the spectrum, the channel $\e$ is mixing if and only if its spectrum has the property that $\abs{\lambda_i} \neq 1, \forall i \neq 0$. Similar to the ergodic property, for any $A,B \in \mathcal{M}_d$, the equivalent definition of mixing holds if and only if the following equivalent conditions are satisfied, 
\begin{equation}
    \begin{split}
     \lim_{n\rightarrow \infty} \e^n(A) &= \Tr (A) \rho_* \\
        \lim_{n\rightarrow \infty} \Tr (\e^n(A)B) & = \Tr(A) \Tr(B\rho_*).       
    \end{split}
\end{equation}

Mixing implies ergodicity but not the converse. If the spectrum is such that there exist $\lambda_k$ for which $\lambda_k \neq 1$ but $\abs{\lambda_k} =1$. Such channels are ergodic but not mixing. In classical ergodic theory, further classifications of mixing channels like weak mixing and strong mixing are defined. It has been shown recently that in the case of quantum channels, weak mixing and strong mixing is equivalent to mixing~\cite{singh2022ergodic}, and we focus on mixing in this work. 

\section{Ergodicity, Mixing and operator entanglement \label{sec:erg_res}} 
In this Section, we study the ergodic and mixing nature of quantum channels defined in Eq.~\ref{eq:uni_ex}, by considering the spectrum of the quantum channel with its representation in terms of the unitary operator as in Eq.~\ref{eq:L}. 

\subsection{Sufficient condition for mixing \label{sec:suff_erg} }

The Hilbert-Schmidt norm or Frobenius norm of an operator $A$ is defined as $\norm{A} = \sqrt{\Tr(AA^\dagger)}$. By using an easily provable identity $\Tr(X^{R_2}X^{R_2\dagger}) = \Tr(XX^\dagger)$ for any square matrix $X$, we can see 
\begin{equation}
    \norm{\ml}^2 = \Tr\left[ (U^{R_2} (I \otimes \sigma ) U^{R_2\dagger})^2 \right]. 
    \label{eq:norm} 
\end{equation}
We are more interested in the case in which $\sigma = I/d$, and for this we have 
\begin{equation}
 \norm{\ml}^2 = \frac{1}{d^2}\Tr\left[ (U^{R_2}  U^{R_2\dagger})^2 \right].
\end{equation}

In order to understand the mixing of quantum channel $\ml$ as a function of the unitary operator $U$, it is required to understand the entangling nature of the unitary operator. Physically, it encodes the interaction that the bipartite unitary operator $U$ induces between the system and environment. Consider an operator Schmidt decomposition~\cite{nielsen2003quantum} of a unitary operator $U$
\begin{equation}
 U = \sum_i \sqrt{\mu_i} A_i \otimes B_i,
\end{equation}
where $\mu_i \geq 0$ and $\Tr (A_i A_j^\dagger) = \Tr (B_i B_j^\dagger) = \delta_{ij}$. The unitarity of $U$ implies that $\sum_{i=0}^{d^2-1} \mu_i = d^2$, and hence $\mu_i/d^2$ can be treated as probabilities and related entropies can be defined. By using linear entropy,  the normalized operator entanglement $E(U)$ is defined as~\cite{Zanardi2001}, 
\begin{equation}
 E(U) = \left(\frac{d^2}{d^2-1}\right) \left(1- \frac{1}{d^4}  \Tr (U^{R_2}U^{R_2\dagger})^2 \right).
 \label{op_e}
\end{equation}
As we are interested in nontrial eigenvalues $\lambda_{i\neq 0}$, lets express $\tilde{\ml} = \ml - \ketbra{I}$ and the spectrum of $\tilde{\ml}$ represents eigenvalues of $\ml$ excluding $\lambda_0 =1$, 
\begin{equation}
 \norm{\tilde{\ml}}^2 = \norm{\ml}^2 - 1 = (d^2 -1) (1 - E(U)).  
\end{equation}
Consider the Schur decomposition of $\tilde{\ml} = V_1TV_2$, where $V_1$ and $V_2$ are unitary operators and $T$ is an upper triangular matrix with diagonal entries $\lambda_i$. The Hilbert-Schmidt norm is unitarily invariant which implies that $\norm{\ml} = \norm{T}$, and since $T$ is upper triangular matrix, it follows that 
\begin{equation}
    \sum_{i=1}^{d^2-1} \abs{\lambda_i}^2 \leq (d^2 -1) (1 - E(U)). 
    \label{eq:ineq}
\end{equation}

Interesting results follow from the inequality~(\ref{eq:ineq}). $\abs{\lambda_{d^2-1}}^2 (d^2-1) \leq \sum_{i=1}^{d^2-1} \abs{\lambda_i}^2 \leq (d^2 -1) (1 - E(U))$ and we see that 
\begin{equation}
 \abs{\lambda_{d^2-1}} \leq \sqrt{1-E(U)}. 
\end{equation}
This captures that whenever $E(U) \neq 0$, at least there exists one mode in which $\abs{\lambda_{d^2-1}} < 1$. 

The upper bound for $\abs{\lambda_1}$ can be obtained similarly as 
$\abs{\lambda_1}^2 \leq \sum_{i=1}^{d^2-1} \abs{\lambda_i}^2 \leq (d^2 -1) (1 - E(U))$, 
\begin{equation}
 \abs{\lambda_1} \leq \sqrt{(d^2 -1) (1 - E(U))}. 
 \label{eq:up_bnd}
\end{equation}
From this, it follows that whenever 
\begin{equation}
 E(U) > E_* = \frac{d^2-2}{d^2-1},  
 \label{eu_bound}
\end{equation}
guarantees $\abs{\lambda_1} <1$, which imply mixing. Thus, this provides a sufficient condition for the quantum channel to be mixed. A similar condition for discrete Hamiltonian dynamics holds for mixing in the many-body spacetime dual circuits in terms of entangling power~\cite{ASA_2021}. 

For general $\sigma$, the spectrum is bounded as 
\begin{equation}
    \sum_{i=1}^{d^2-1} \abs{\lambda_i}^2 \leq \Tr\left[ (U^{R_2} (I \otimes \sigma ) U^{R_2\dagger})^2 \right] -1,  
\end{equation}
which provides the sufficiency condition for mixing $\abs{\lambda_1} <1$ if 
\begin{equation}
    \Tr\left[ (U^{R_2} (I \otimes \sigma ) U^{R_2\dagger})^2 \right] < 2.
\end{equation}
If $U$ is {\it dual-unitary}, $U^{R_2}U^{R_2\dagger} = I$, then the purity of the state $\sigma$ determines mixing nature, $\abs{\lambda_1}<1$ if 
\begin{equation}
    \Tr (\sigma^2) < \frac{2}{d}. 
    \label{eq:bound}
\end{equation}
For $d=2$, any mixed state guarantees the mixing, and for general $d$, the bound (\ref{eq:bound})  provides the sufficiency condition for mixing. 

\subsection{Local unitary stability of mixing condition \label{sec:stab}}
Local unitary operators act as a disorder. Two unitary operators $U$ and $U^\prime$ are said to be local unitarily (LU) equivalent if $U^\prime$ can be written as 
\begin{equation}
    U^\prime = (u_1 \otimes u_2) U (u_3 \otimes u_4) 
    \label{eq:LU}
\end{equation}
where $u_i \in \mathcal{H}^d$ are any unitary operators. The spectrum of the quantum chennel $\e (\rho)$ with $U$ and $U^\prime$ with $\e^\prime (\rho) = \Tr_2 [U^\prime (\rho \otimes \sigma ) U^{\prime \dagger}]$ are not identical. The linear operator $\ml^\prime$ for the channel $\e^\prime$ is 
\begin{equation}
\begin{split} 
&\ml^\prime = \left[ U^{\prime R_2} (I \otimes \sigma ) U^{\prime R_2 \dagger} \right]^{R_2} \\
&= \left[ (u_1 \otimes u_3^T) \left[U^{R_2} (I \otimes u_4 \sigma u_4^\dagger ) U^{R_2\dagger}\right] (u_1^\dagger \otimes u_3^*) \right]^{R_2} \\
&= (u_1 \otimes u_1^*) \left[ U^{R_2} (I \otimes  \sigma^\prime ) U^{R_2\dagger}  \right] (u_3 \otimes u_3^*)
\end{split} 
\end{equation} 
where $ \sigma^\prime = u_4 \sigma u_4^\dagger$. Here we have used the identity 
\begin{equation}
    \left[(u_1 \otimes u_2 ) A (u_3 \otimes u_4)\right]^{R_2} = (u_1 \otimes u_3^T) A^{R_2} (u_2^T \otimes u_4 ). 
\end{equation}
 However, it can be seen that $\norm{\ml}^2 = \norm{\ml^\prime}^2$ and hence the sufficient condition for mixing remains invariant under any local unitary operations of the form Eq.~(\ref{eq:LU}). For $\sigma= I/d$, the eigenvalue $\abs{\lambda_1}$ is upper bounded by the operator entanglement $E(U)$ and $E(U)$ is LU equivalent quantity. Even for general $\sigma$ it can be seen the sufficient condition is invariant with local unitary operators. The bound provides stable mixing criteria that are invariant under any local unitary operators. 

\subsection{Block diagonal construction of quantum channels \label{sec:diagon}}
We further study the ergodic hierarchy of channels using block diagonal unitary operators. Define block diagonal unitary operator $U_D$ as $    U_D = \bigoplus_{i=1}^d u_i $ in which $u_i$ are $d \times d$ unitary operators. For $\sigma = I/d$ the linear operator $\ml [U_D]$ ,
\begin{equation}
    \begin{split}
        \ml [U_D] & = \frac{1}{d} [U_D^{R_2}U_D^{R_2\dagger}]^{R_2}. \\ 
    \end{split}
\end{equation}
The $U_D^{R_2}$ of block diagonal matrix contains $d$ rows and the inner product of these rows forms the $d^2$ elements of $U_D^{R_2} U_D^{R_2\dagger}$, and hence $\ml [U_D]$ is diagonal with diagonal entries 
\begin{equation}
 \lambda_{ij} = \frac{1}{d} \Tr (u_iu_j^\dagger).
\end{equation} 
There are $d$ eigenvalues $\lambda_{ii}=1$, and hence the channel is non-ergodic. If we choose an orthonormal unitary operator set $u_j$, then there are $d$ eigenvalue $\lambda_{ii} =1$, and the remaining eigenvalues vanish. This is an interesting case as the channel is non-ergodic as a whole but there are $d$ eigenmodes which is highly mixing. If $u_j = u$, all the eigenvalues $\lambda_{ij} =1$ and an example of integrable channel. 

\subsection{Bound on the convergence \label{sec:conv}}
The convergence of any state $\rho$ to the fixed state $\rho_*$ under the action of map $\e$ 
can be bounded as~\cite{terhal2000problem}
\begin{equation}
    \norm{\e^n (\rho) -\rho_*}_1 \leq C_d \text{poly}(n) \abs{\lambda_1}^n, 
\end{equation}
where $C_d$ is some constant depending on the local dimensions $d$. Due to the upper bound on  $\abs{\lambda_1}$ in Eq.~(\ref{eq:up_bnd}), we can write 
\begin{equation}
    \begin{split}
         \norm{\e^n(\rho) -\rho_*}_1 &\leq C_d \text{poly}(n) \left((d^2-1)(1-E(U))\right)^\frac{n}{2} \\
         & \leq C(d,n) \left(1-E(U)\right)^\frac{n}{2}. 
    \end{split}
    \label{eq:convbound}
\end{equation}
The operator entanglement $E(U)$ thus provides an upper bound to the convergence of $\rho$ to $\rho_*$ for a fixed iteration $n$.

\subsection{Two-qubit unitary operators and single qubit quantum channels \label{sec:qubits}}

\begin{figure}
    \centering
    \includegraphics[scale=0.8]{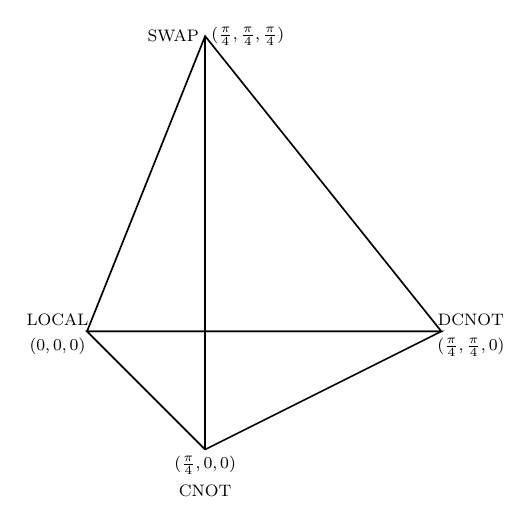}
    \caption{Set of all two-qubit unitary LU equivalent unitary operators represented in the tetrahedron called Weyl chamber. Here the coordinates represent the parameters of unitary operators in Eq.~(\ref{eq:cartan}).}
    \label{fig:weyl}
\end{figure}

For simplest scenerio of $d=2$, any two qubit unitary operator $U$ is unitarily equivalent to $U_{nl}$~\cite{Zhang2003}, 
\begin{equation}
\begin{split}
U_{nl} & = \exp\left[-i (x\,\sigma_x \otimes \sigma_x + y\, \sigma_y \otimes \sigma_y + z\, \sigma_z \otimes \sigma_z)\right] \\
 &= \left( \begin{array}{cccc}
    e^{-iz}c_-       & 0 & 0  & -ie^{-iz}s_- \\
    0       & e^{iz}c_+ & -ie^{iz}s_+ & 0 \\
    0       & -ie^{iz}s_+ & e^{iz}c_+ & 0 \\
    -ie^{-iz}s_-       & 0 & 0  & e^{-iz}c_-  
\end{array} \right), 
\end{split}
\label{eq:cartan}
\end{equation}  
where $c_\pm = \cos(x \pm y), s_\pm = \sin(x \pm y)$ and  
$0 \leq |z| \leq y \leq x \leq \frac{\pi}{4}$. For such $U_{nl}$ with $\sigma = I/d$, $\ml$ can be explicitly diagonalized as 
\begin{equation}
    \ml = \sum_{i=0}^{3} \tilde{\lambda}_i \ketbra{\phi_i}, 
    \label{eq:qubcha}
\end{equation}
where 
 \begin{equation}
     \begin{split}
         \ket{\phi_0} = \frac{\ket{00}+\ket{11}}{\sqrt{2}} & \quad 
         \ket{\phi_1} = \frac{\ket{00}-\ket{11}}{\sqrt{2}} \\
         \ket{\phi_2} = \frac{\ket{01}+\ket{10}}{\sqrt{2}} & \quad 
         \ket{\phi_3} = \frac{\ket{01}-\ket{10}}{\sqrt{2}}
     \end{split}
 \end{equation}
The eigenvalues with trivial  $\lambda_0 = \tilde{\lambda}_0 = 1$ and unordered eigenvalues are 
\begin{equation}
    \begin{split}
        \tilde{\lambda}_1 &= \frac{1}{2} \left[ \cos(2 (x+y)) + \cos(2 (x-y))  \right] \\
        \tilde{\lambda}_2 &= \cos(2y) \cos(2z), \quad  \tilde{\lambda}_3 = \cos(2x) \cos(2z). 
    \end{split}
\end{equation}
The variables $(x,y,z)$ in Eq.~(\ref{eq:cartan}) uniquely represent a set of LU equivalent unitary operators~\cite{Zhang2003}. It forms a tetrahedron called the Weyl chamber, as shown in Fig.~(\ref{fig:weyl}). The ergodic properties of qubit quantum  channel (\ref{eq:qubcha}) can be  characterized using these variables as :
\begin{outline}[enumerate]
    \1 {\bf Case 1:} for the line 'LOCAL' to 'CNOT' :: for which $(x,y,z)$ varies from $(0,0,0)$ to $(\pi/4,0,0)$. The ordered eigenvalues can be then $\lambda_1 = \tilde{\lambda}_2 = 1$ and $\lambda_2 = \lambda_3 = \tilde{\lambda}_1 = \tilde{\lambda}_3 = \cos 2x $. The entire line represents a non-ergodic channel. For $x=0$, local, $\lambda_2 = \lambda_3 = 1$ represents integrable channel. 

    \1 {\bf Case 2:} for the line 'CNOT' to 'DCNOT' :: for which $(x,y,z)$ varies from $(\pi/4,0,0)$ to $(\pi/4,\pi/4,0)$. We have $\lambda_1 = \tilde{\lambda_2} = \cos 2y$ and $\lambda_2= \lambda_3 = \tilde{\lambda}_1 = \tilde{\lambda}_3 = 0 $. Except $y=\pi/4$ i.e.,  CNOT gate,   all other unitaries with will have  $\lambda_1 = \cos(2y) < 1$. Single qubit channels with these unitaries forms mixing channels.
    
    \1 {\bf Case 3:}  The qubit channel corresponding to the unitary operators from line from SWAP to DCNOT for which $(x,y,z)$ varies from $(\pi/4,\pi/4,0)$ to $(\pi/4,\pi/4,\pi/4) $, represents completely depolarizing channel with $\lambda_i =0 \quad \forall i\neq 0$. 
    
    \1 {\bf Case 4:} For the line LOCAL to DCNOT and from LOCAL to SWAP all represent mixing vertices.  
    
\end{outline}

\begin{figure}
    \centering
    \includegraphics[width=0.45\textwidth]{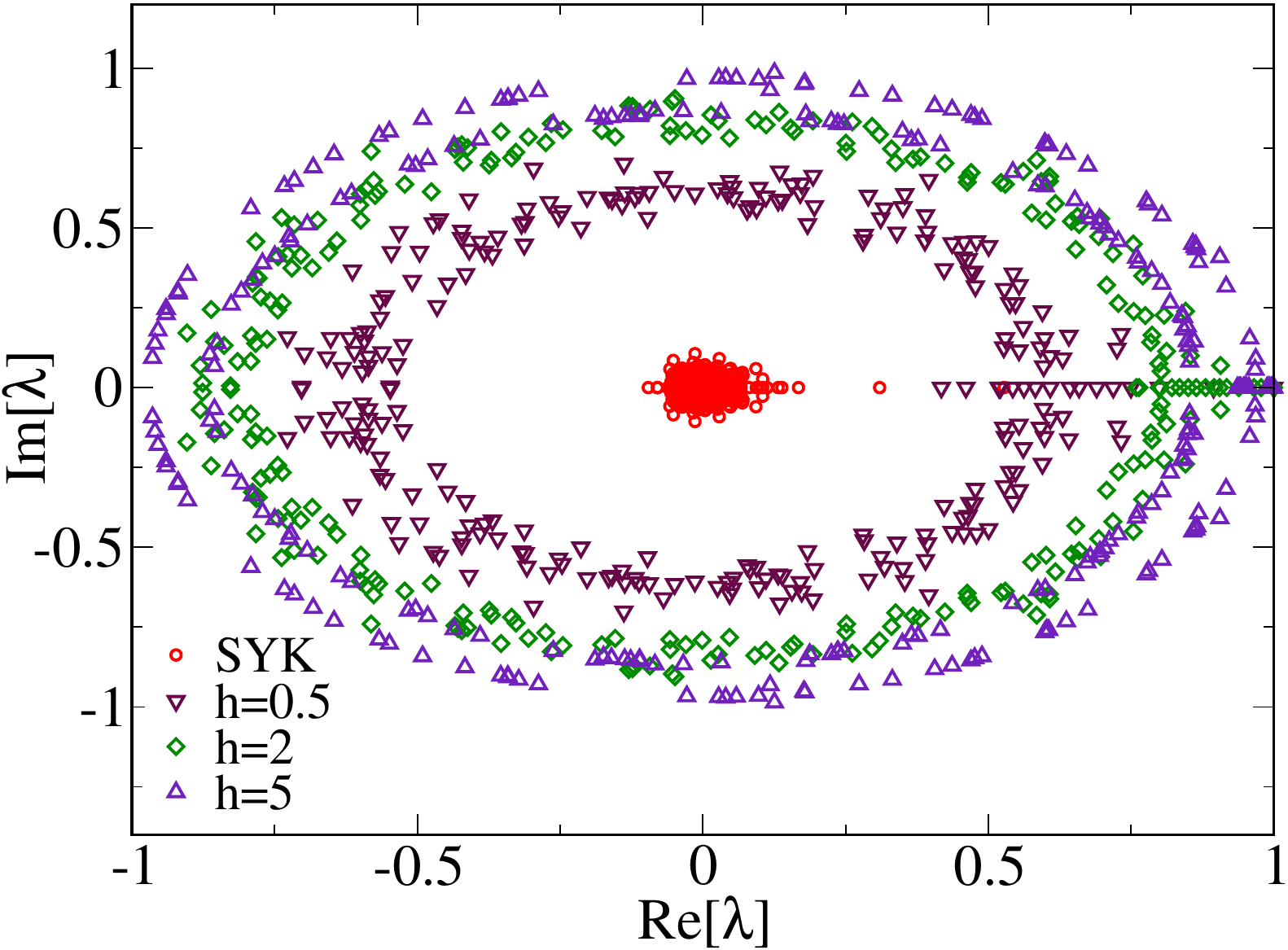}
    \caption{Variation of the real part vs. the imaginary part of the eigenvalue spectrum of the matrix representation of quantum channel $\ml_H$ for the SYK model and the Hamiltonian $H_{SR}$. }
    \label{fig2}
\end{figure}

\begin{figure}
    \centering
    \includegraphics[width=0.45\textwidth]{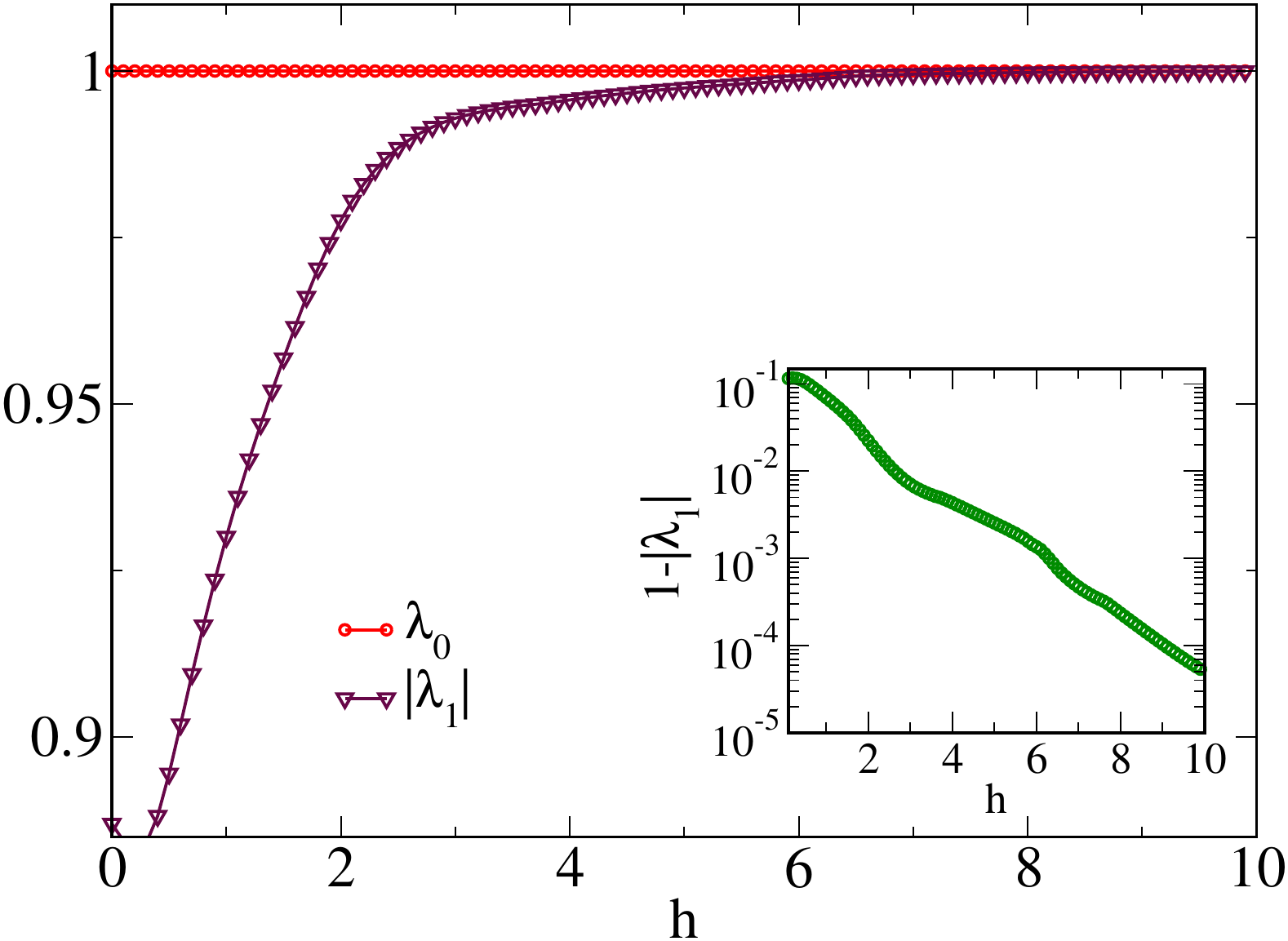}
    \caption{Variation of the 1st two eigenvalues of $\ml_{H}$ $\lambda_0$ and $|\lambda_1|$ (after arranging in descending order in magnitudes) with $h$ for the Hamiltonian $H_{SR}$. Inset shows the approach of $|\lambda_1|$ to 1 with increasing $h$ in semi-log scale.}
    \label{fig3}
\end{figure}

\begin{figure}
    \centering
    \includegraphics[width=0.45\textwidth]{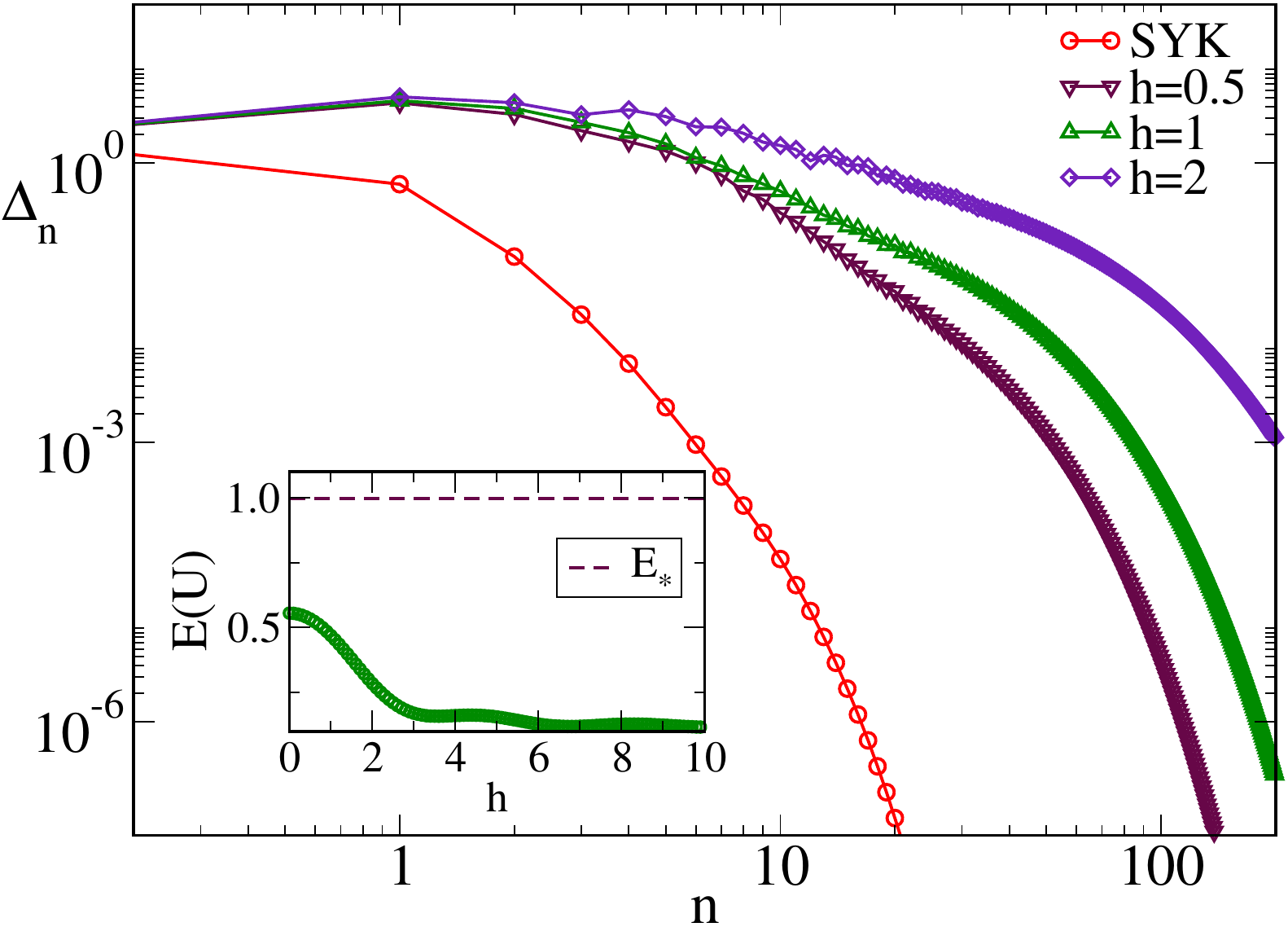}
    \caption{Variation of $\Delta_n=||\ml_H^n \rho-\rho_*||_1$ with $n$ for the SYK model and the Hamiltonian $H_{SR}$ and for the initial Neel state. Inset shows the variation of the operator entanglement with $h$ for the Hamiltonian $H_{SR}$. }
    \label{fig4}
\end{figure}

\begin{figure}
    \centering
    \includegraphics[width=0.45\textwidth]{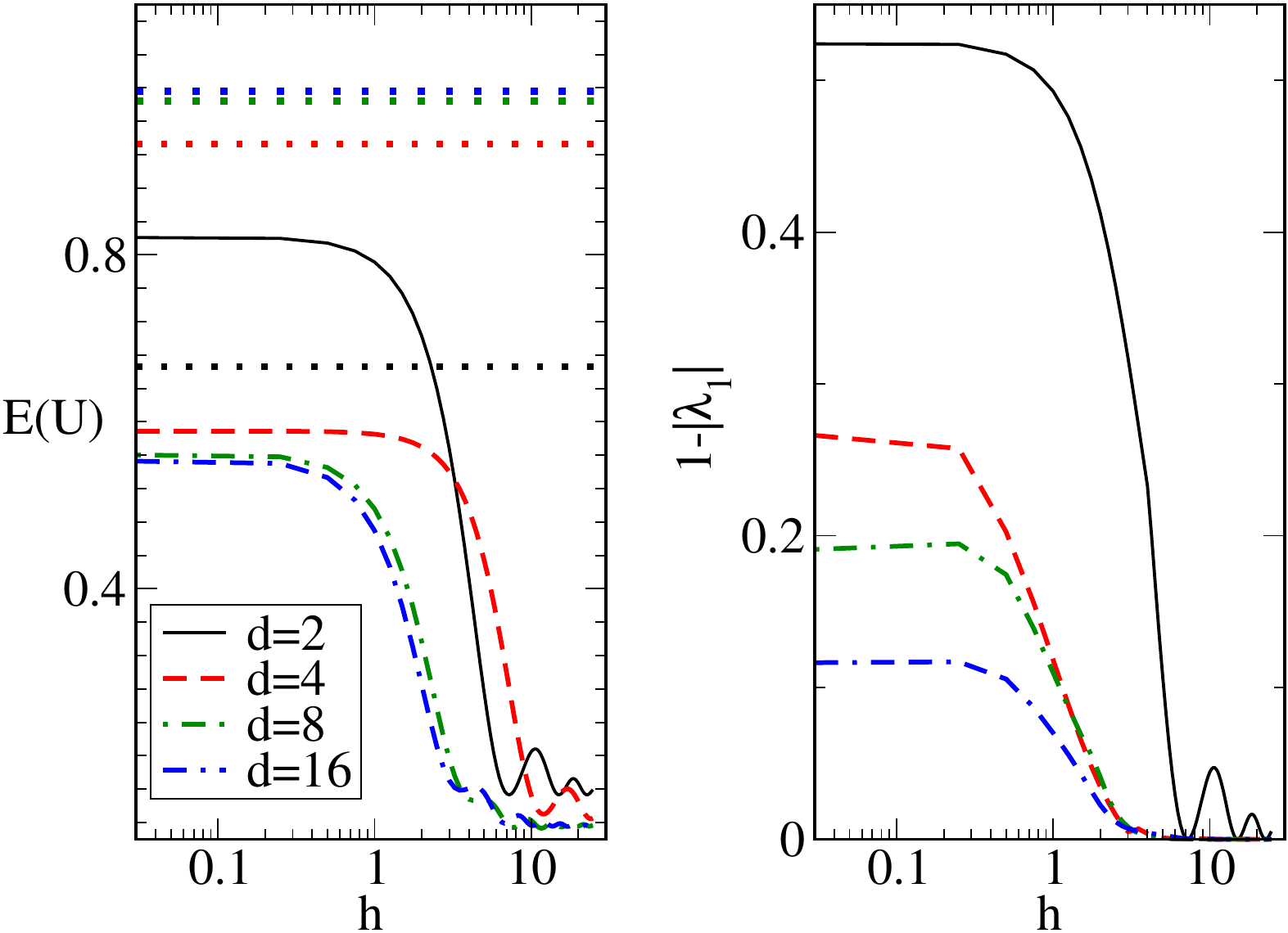}
    \caption{  (Left panel) Variation of operator entanglement $E(U)$ with $h$ and for the Hamiltonian $H_{SR}$ for different values of $d$. The dotted lines correspond to $E_*$. (Right panel)  Variation of the spectral gap  $1-|\lambda_1|$ with $h$ for the Hamiltonian $H_{SR}$ and for different values of $d$.}
    \label{fig4_1}
\end{figure}

 
\section{Ergodic and mixing channels with many-body Hamiltonians \label{sec:manybody}}
 In the previous section, we study a two-qubit system, i.e., the Hilbert space dimension of the system  $d=2$ (the Hilbert space dimension of the system+bath $d^2=4$). In this section, we focus on the many-body quantum systems where $d\geq2$.
We study two types of many-body Hamiltonian of interacting fermions on a one-dimensional lattice of length $L$. 
The first one is  an experimentally realizable~\cite{schreiber2015observation,abanin2019colloquium} short-range (SR) model consisting of nearest-neighbor hopping and nearest-neighbor interaction in the presence of quasi-periodic potential. The other one is a model with random all-to-all interactions, famously known as the SYK model.
The Hamiltonians for these two models are given by:
   \begin{eqnarray}
 {H_{SR}}=-\sum _{i=1}^{L-1}(\hat{c}^{\dag}_i\hat{c}^{}_{i+1}+\text{H.c.})+ V\sum_{i=1}^{L-1}\hat{n}_i\hat{n}_{i+1} \nonumber\\
 +h\sum_{i=1}^{L}\cos(2\pi\alpha i) \hat{n}_i, 
 \label{hamiltonian1}
\end{eqnarray}
and 
 \begin{eqnarray}
 {H_{SYK}}=\sum _{i,j,k,l}^{L}J_{ij;kl}\hat{c}^{\dag}_i\hat{c}^{\dag}_j\hat{c}^{}_{k} \hat{c}^{}_{l}, 
 \label{hamiltonian2}
\end{eqnarray}
Where, $\hat{c}^{\dag}_i$ ($\hat{c}_{i}$) is the fermionic creation (annihilation) operator at site $i$, $\hat{n}_i=\hat{c}^{\dag}_i\hat{c}_{i}$ is the number operator.  $\hat{c}^{\dag}_i$  and $\hat{c}_{i}$ obey following anti-commutation 
relations i.e. $\{\hat{c}_i,\hat{c}^{\dag}_j\}=\delta_{ij}$, $\{\hat{c}_i,\hat{c}_j\}=0$, and $\{\hat{c}^{\dag}_i,\hat{c}^{\dag}_j\}=0$. In the case of the Hamiltonian $H_{SR}$, $\alpha$ is an irrational number which we choose to be $\alpha=(\sqrt{5}-1)/2$. We also set the interaction parameter $V=1$ for all our calculations.  
In the absence of the interaction, i.e., $V=0$, $H_{SR}$ undergoes a delocalization-localization transition as one tunes the strength of the quasi-periodic potential, and $h=2$ turns out to be the transition point~\cite{aubry1980analyticity}. 
In the presence of interaction, it is believed that within an isolated quantum dynamics framework, changing the strength of the quasi-periodic potential $h$, one can drive the system from ergodic to many-body localized (MBL) phase~\cite{iyer.2013}, though, in recent days, the stability of MBL transition in the thermodynamic limit has been questioned~\cite{vidmar.2020}. 

On the other, the SYK Hamiltonian $H_{SYK}$ in the large $L$ limit is exactly solvable~\cite{sachdev.93,parcollet.97}.
The study of this model has uncovered a remarkable
direct connection to quantum gravity with a
black hole in $\text{AdS}_2$ ~\cite{subir.10,subir.15,maldacena.16}. As in the case of a black hole,
the SYK model gives
rise to thermalization and many-body quantum chaos with
a Lyapunov exponent that saturates
the quantum limit~\cite{maldacena2016bound}.
The random four-fermion coupling
$J_{ij;kl}$ drawn from a Gaussian distribution with zero mean and variance $|J_{ij;kl}|^2
 = J^2/L^3$; $J_{ij;kl}$ are properly antisymmetrized,
i.e., $J_{ij;kl}=-J_{ji;kl}=-J_{ij;lk}$ and $J_{ij;kl}=J^{*}_{kl;ij}$ ~\cite{sachdev.2015}. For all our calculations, we have taken $J=1$, and all the data for the SYK model are generated after doing an average of 100 random realizations.
 It is important to point out that though both of these systems are fermionic in nature, using Jordan-Winger (JW) transformation, they can be mapped into models of spin-1/2  particles (qubit),  and the issues related to Fermion to qubit mappings~\cite{banuls2007entanglement,gigena2015entanglement,vidal2021quantum} will not impact our findings.  Interestingly, even though JW transformation is non-local, the Hamiltonian $H_{SR}$ becomes the spin-1/2 XXZ model in the presence of the incommensurate magnetic field, which involves spin-exchange term only between nearest sites. On the other hand, the SYK model has all-to-all interactions. Recently, both the SYK model and the $H_{SR}$ model have been realized in quantum computing platforms using proper Qubitization and Trotterization ~\cite{qcxxz,qcsyk1,qcsyk2}. 

We consider the first $L/2$ sites as the system and the rest of the $L/2$ sites as the bath. Hence, the Hilbert space dimension of the sub-system and the bath are $d=2^{L/2}$.
To understand the ergodic hierarchy of these systems according to the formalism described in Sec.\ref{formalism}, we need to investigate the matrix $\ml$ (see Eq.~\ref{eq:L}). Here, for all the calculations, we consider an infinite temperature bath; hence, we replace $\sigma$ in 
Eq.~\ref{eq:L} by $I_d/d$ (where $I_d$ is a $d$ dimensional identity matrix) and the Unitary operator 
$U=e^{iH}=U_H$ (where $H=H_{SR}$ or $H=H_{SYK}$).  Eq.~\ref{eq:L} reads as, 
\begin{equation}
 \ml_H = \frac{1}{d}\left[ U_H^{R_2}  U_H^{R_2 \dagger}\right]^{R_2}.
 \label{eq:L_I}
\end{equation}

First, we investigate the eigenvalue spectrum of the matrix 
$\ml_H$. Figure~\ref{fig2} shows the eigenvalue spectrum of $\ml_H$ in the complex plane, where we plot the real part vs. the imaginary part of the eigenvalues $\lambda_i$. We find that for the Hamiltonian $H_{SR}$, as we increase $h$, the effective distance from the center of most of the data points increases, indicating the absolute value of a large fraction of eigenvalues $|\lambda_i|$ approaches toward $1$. On the other hand, for the SYK model, most of the data points are in the neighborhood close to the center. 
Intuitively, one would also expect similar results, given the Hamiltonian $H_{SYK}$ 
is the maximally chaotic model that saturates the quantum limit of the Lyapunov exponent~\cite{maldacena2016bound}. It is also important to point out that 
while we have related the inverse of mixing time (mixing speed) with the spectral gap $1-|\lambda_1|$ in the rest of the manuscript, 
one could also use some other definition, such as $1-\frac{1}{d^2-1}\sum_{i=1}^{d^2-1}|\lambda_i|$; 
the average distance of the absolute values of all the $\lambda_i$s from $\lambda_0$. Given the maximally chaotic nature of the SYK model, one would expect an extremely fast mixing for this model; hence, the indicator $1-\frac{1}{d^2-1}\sum_{i=1}^{d^2-1}|\lambda_i|$ (or even $1-|\lambda_1|$) should have a large value. All the eigenvalues are concentrated near the origin, indicating the mixing speed measure will show a much larger value for the SYK model compared to $H_{SR}$ models.

Moreover, we arrange the spectrum in descending order of magnitude (as mentioned in Eq.~\eqref{order}), plot $\lambda_0$ (which is real) and $|\lambda_1|$ as a function of $h$ in Fig.~\ref{fig3}. While $\lambda_0=1$, $|\lambda_1|$ approaches towards $1$ with increasing $h$. However, the inset of Fig.~\ref{fig3} ensures that for a finite value of $h$, $|\lambda_1|$ does not reach $1$, though $1-|\lambda_1|$ approaches to zero with increasing $h$. It is important to emphasize that in the $h\to\infty$ limit, $U_H$ will be completely diagonal. Hence, it is a non-ergodic channel~(see Sec. (\ref{sec:diagon}). 
$|\lambda_1|$ for the SYK model for $L=8$ turns out to be $0.5275$, which is much smaller than $|\lambda_1|$ obtained for $H_{SR}$.  Intuitively, one expects that given the SYK model is maximally chaotic. This confirms that the quantum channel representations of the Hamiltonian $H_{SR}$ and $H_{SYK}$ show mixing. 

To strengthen the claim, 
 we also plot $\Delta_n=||\ml_H^n \rho-\rho_*||_1$ vs $n$. For a mixing channel $\Delta_n\to 0$ as $n\to \infty$, $\forall \rho$ and where $\ml_H \rho_*=\rho_*$. The results shown in Fig.~\ref{fig4} are for Neel state $|\psi\rangle=\Pi_{i=1}^{L/2}\hat{c}^{\dag}_{2i-1} |0\rangle$ ~\cite{alba2018entanglement,modak.2023,modak2020entanglement}; however, we check our results for a few other states as well. We find that, indeed, $\Delta_n$ approaches zero as $n$ increases. 
 As expected, with increasing $h$ in $H_{SR}$, the fall-off of $\Delta_n$ becomes slower. On the other hand, the decay of  $\Delta_n$ is extremely fast for the SYK model. Moreover, we also study the variation of the operator entanglement $E(U)$ (see Eq.~\eqref{op_e}) with $h$ for the Hamiltonian $H_{SR}$ in the inset of Fig.~\ref{fig4} and find that it decreases with increasing $h$, and always remains below $E_*$ (see Eq.~\eqref{eu_bound}). On the other hand,  for the SYK model (for $L=8$), $E(U)\simeq 0.9925$ is very close to $E_*$. 

Next, we also investigate the effect of the Hilbert space dimension of the system $d$ in the ergodic hierarchy. In Fig.~\ref{fig4_1} (right panel), we show the variation of the spectral gap $1-|\lambda_1|$ (a measure of the mixing speed) with $h$ of the short-range Hamiltonian $H_{SR}$ for different values of $d$. The spectral gap appears to decrease quickly with increasing $d$. It says that as $d$ increases, the inverse mixing time (mixing speed) decreases for the short-range model. Similar features can also be observed in the variation of the operator entanglement vs. $h$ plot in  Fig.~\ref{fig4_1} (left panel). Interestingly, we find that only for $d=2$ and for a certain parameter range ($h\in[0,3]$), $E(U)$ satisfies the sufficient condition of the mixing, i.e., $E(U) > E_*$. For $d>2$, $E(U)$ always appears to be less than $E_*$. Intuitively, these results can be understood in the following way. In the case of the $H_{SR}$, the coupling term $H_{sb}$ (according to schematic Fig.~\ref{schematic}) between the system and the bath involves only two adjacent sites, does not depend on the system size $L$ or the Hilbert space dimension of the system $d$. Hence, one would expect that if $L$ (or $d$) increases, the bath will take more time to thermalize the system; hence, the inverse mixing time is expected to go down with increasing $d$. On the contrary, for the all-to-all SYK model, this is not the case; hence, there is not much of a dependence of the mixing time on $d$.

\begin{figure}
    \centering
    \includegraphics[width=0.45\textwidth]{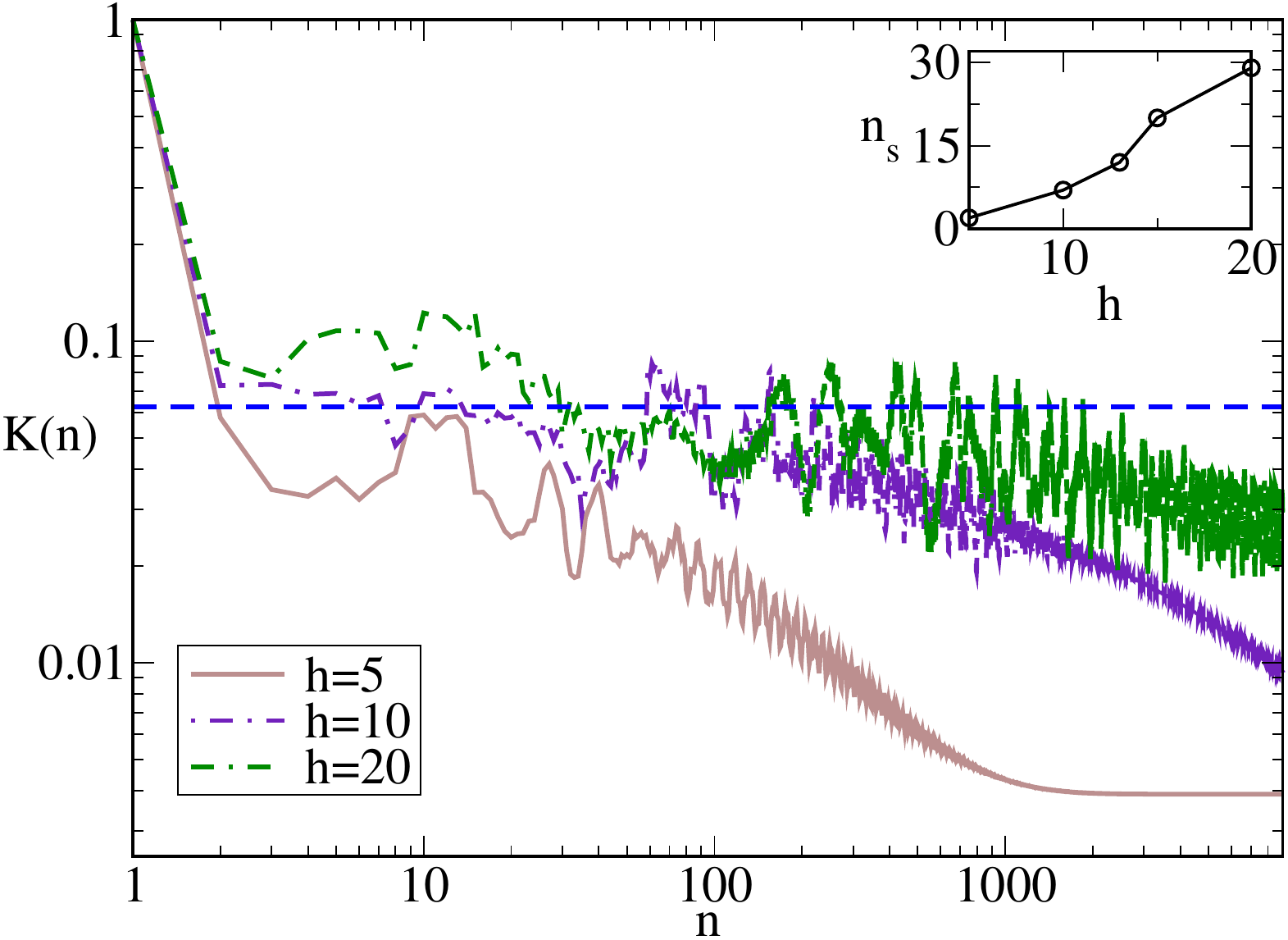}
    \caption{ Variation of generalized SFF $K(n)$ with $n$ for  the Hamiltonian $H_{SR}$. Inset shows the variation of the scrambling time $n_s$ (iteration) with $h$. The dashed horizontal line corresponds to $1/d$.}
    \label{fig5}
\end{figure}

 Further, we also make a quantitative connection between thermalization (measured in terms of scrambling~\cite{PhysRevLett.132.040402}) and the ergodic hierarchy using the spectral form factor (SFF) ~\cite{haake1991quantum,muller2004semiclassical,muller2004semiclassical,vikram2023dynamical}, a statistical measure of quantum chaos with that of the ergodic hierarchy. The generalized SFF for a quantum channel is defined as~\cite{PhysRevLett.132.040402}, 
\begin{equation}
    K := \frac{1}{d^2} \sum_i \abs{\Tr A_i }^2
\end{equation}
where $A_i$ are the Kraus operator of the dynamical equation (see Eq.~(\ref{eq:kraus})). By using a simple identity $\abs{X}^2 = \Tr(X\otimes X^*)$, we can write generalized SFF as   
 \begin{equation}
     K(n)=\frac{1}{d^2}\Tr\ml^n_H,
 \end{equation}
 where $d$ stands for the Hilbert space dimension of the system. It has been argued that the necessary condition for scrambling in many-body systems is  $K(n)<1/d$ ~\cite{PhysRevLett.132.040402}.   In terms of the eigenvalues of $\ml$, the generalized SFF  
 \begin{equation}
     K(n) = \frac{1}{d^2} (1 + \sum_{i=1}^{d^2-1} \lambda_i^n).
 \end{equation}
 It is very interesting to note that in the block diagonal construction of quantum channels constructed in Sec.~(\ref{sec:diagon}), there exists a model for which $d$ eigenvalues equals $\lambda_i = 1$ and remaining eigenvalues vanish giving rise to a non-thermalizing model. From the ergodic theory perspective, the model itself is non-ergodic, and it matches with that of the many-body perspectives.

Given the necessary condition for scrambling has been argued as  $K(n)<1/d$ ~\cite{PhysRevLett.132.040402}, here, we roughly estimate the scrambling time by identifying $n_s=\min(n)$, such as the $K(n_s)\leq 1/d$.  
We have considered $H=H_{SR}$, computed $K(n)$ as a function of $n$. Previously, we have shown in Fig.~\ref{fig2} and Fig.~\ref{fig3}  that for any finite value $h$, 
the channel $\ml_H$ is mixing. However, as we increase the $h$, the spectral gap $1-|\lambda_1|$ (which is also a measure of inverse mixing time) decreases. One would naively expect that mixing time and the scrambling time should be related to each other. If one observes the decrease in mixing time, similar features should also be noticed in the context of scrambling. 
Figure.~\ref{fig5} validates our hypothesis; we show the variation of $K(n)$ vs. $n$ for different values of $h$. As expected, with increasing $h$, the fall-off of $K(n)$ with $n$ is slower. Moreover, we also identify the scrambling time (iteration) $n_s$ and find that, indeed, it increases with increasing $h$.

\bla 
\section{Conclusion and Discussions}

In this work, we study the ergodic hierarchy of quantum channels by proving the necessary condition for the channel to be mixing in terms of operator entanglement of the unitary operator, and we use these tools to study the mixing in quantum many-body systems. We also prove the stability of the sufficiency of the mixing condition under any single particle local unitary operator. We construct various analytical models using block diagonal unitary operators to explore the ergodic hierarchy of quantum channels. As an illustration, by using the canonical two-qubit unitary operator, we construct various quantum channels of a single qubit that show different ergodic behaviors.    
Then, we explore unitary operators constructed from quantum many-body Hamiltonians, especially focused on two models: 1) short-range interacting models in the presence of quasiperiodic potential and 2)  all-to-all interacting SYK model. We find 
both the models show mixing, while SYK, being maximally chaotic, almost satisfies the sufficiency conditions for mixing; the other model fails to do so. Also, studying different spectral measures of quantum channels and the convergence criteria, it was apparent that the mixing rate for the SYK model is much higher compared to the short-ranged model. Intuitively, it fits well with our notion that the SYK is maximally chaotic in the thermodynamic limit. However,  both in the two-qubit case and for the many-body systems, we could not find an example where the channel is ergodic but not mixing. We also study the quantitative relation between mixing and scrambling using spectral form factor. 

In summary, based on our results,  we  make a few conjectures for generic  quantum many-body systems, and they are  
1) In the context of quantum many-body systems, 
we have looked into a broad range of one-dimensional systems that are experimentally realizable in cold-atom experiments. It turns out that all of them show mixing, we could not find an example where the system is ergodic but not mixing.
2) The  mixing speed decreases with increasing $d$ for the short-range model, 
3) As $d$ increases, the sufficient conditions of mixing in terms of the operator entanglement become harder to achieve. 4) The mixing and scrambling times are typically expected to behave similarly for a generic quantum system. However, we must emphasize one can possibly cook up an extremely fine-tuned model where some of our above conjectures may get violated; it will be interesting to search for such models in future studies.

In the last century, the mathematics of ergodic theory has provided many interesting results and opened up many applications~\cite{katok1995introduction}. 
It would be intriguing to explore how far one can draw parallels from the world of classical ergodic theory to the world of quantum channels. In particular, one might be able to characterize total ergodicity (all powers of the given transformation are ergodic)~\cite{eisner2015operator} and prove that it sits distinctively in between ergodicity and weak mixing, but on the other hand, the notion of mild mixing~\cite{eisner2015operator} defined in any conceivable way will be equivalent to mixing as it would sit between mixing and weak mixing. It would be interesting to see the relation between the ergodic theory of quantum channels and its information-theoretic capabilities.

\section{acknowledgements}
RM acknowledges the DST-Inspire fellowship by the Department of Science and Technology, Government of India, SERB start-up grant (SRG/2021/002152). SA acknowledges the start-up research grant from
SERB, Department of Science and Technology, Govt. of India (SRG/2022/000467).

\bibliography{channel,ent_local}
\end{document}